\begin{document}

\title{Noise-Resilient Homomorphic Encryption: A Framework for Secure Data Processing in Health care Domain}

\address{B. Shuriya, shuriyasmile@gmail.com}

\author{B. Shuriya \\ 
Department of Computer Science and Engineering  \\
PPG Institute of Technology \\ 
Coimbatore 641035, India\\
shuriyasmile@gmail.com
\and S. Vimal Kumar \\ 
Department of Defense System Engineering \\
Sejong University \\
Seoul 05006, Republic of Korea \\
svimalkumar16@gmail.com
\and K. Bagyalakshmi \\ 
Department of Mathematics \\
SNS College of Technology \\
Coimbatore 641035, India \\
luxmi4480@gmail.com}

\maketitle

\runninghead{B. Shuriya, S. Vimal Kumar, K. Bagyalakshmi}{Noise-Resilient Homomorphic Encryption}

\begin{abstract}
  In this paper, we introduce the Fully Homomorphic Integrity Model (HIM), a novel approach designed to enhance security, efficiency, and reliability in encrypted data processing, primarily within the health care industry. HIM addresses the key challenges that noise accumulation, computational overheads, and data integrity pose during homomorphic operations. Our contribution of HIM: advances in noise management through the rational number adjustment; key generation based on personalized prime numbers; and time complexity analysis details for key operations. In HIM, some additional mechanisms were introduced, including robust mechanisms of decryption. Indeed, the decryption mechanism ensures that the data recovered upon doing complex homomorphic computation will be valid and reliable. The healthcare id model is tested, and it supports real-time processing of data with privacy maintained concerning patients. It supports analytics and decision-making processes without any compromise on the integrity of information concerning patients. Output HIM promotes the efficiency of encryption to a greater extent as it reduces the encryption time up to 35ms and decryption time up to 140ms, which is better when compared to other models in the existence. Ciphertext size also becomes the smallest one, which is 4KB. Our experiments confirm that HIM is indeed a very efficient and secure privacy-preserving solution for healthcare applications.
\end{abstract}

\begin{keywords}
Fully Homomorphic Integrity Model, efficiency, reliability, encryption, healthcare.
\end{keywords}

\section{Introduction}

Data security has become an important concern in the era of digital transformations, even more particularly with regard to domains involving the processing of highly sensitive data such as healthcare records, financial transactions, and personal information [1]. Traditional processes of encrypting data may indeed be reliable but often demand decryption before any computational operation can be rendered by introducing possible vulnerabilities in terms of security. HE puts forward a revolutionary vision: computations that can be performed in ciphertext so that decryption produces outputs just as if computations had been conducted on plaintext. This vision opens up a secure, privacy-preserving world of computation with data, notably for healthcare where confidentiality of data is paramount [2]. This paper presents a noise-resilient homomorphic encryption framework developed to meet the specific requirements of secure data processing in healthcare applications. 

The framework covers the core ones like key generation, encryption, homomorphic operations, and strong noise management techniques. Noise build-up during processing could be devastating to the integrity and accuracy of homomorphic computations thereby resulting in errors or degraded results. If noise becomes too high, it could eventually compromise data accuracy, which is essential in healthcare due to the necessity for accurate information to make good decisions. We will start by introducing homomorphic encryption and its potential in modern cryptographic healthcare applications. Then, we will proceed in giving a detailed description of how to generate a pair of public and private keys such that safe communication between the data sources and the processing parties is assured. Finally, we will describe encryption methods as a way of converting the original patient data into an unreadable ciphertext that will only be revealed into its original form at the time of computation or transfer. Noise management in homomorphic encryption plays the key role in proper results. We examine the noise-reduction operations, such as bootstrapping, which renders noisy ciphertexts fresh and therefore extends the operation life of ciphertexts. 

Our work covers homomorphic operations designed towards supporting computations on encrypted data on secure healthcare data, we particularly focus on transitions regarding noise resilient addition and multiplication operations and point out a proper noise resilient noise formation transition. The progression of transformations should be documented clearly so that recovery to the original data is made simple. We additionally describe a scheme for encryption which admits decryption resilient to noise, with verifiable faithfulness in the reconstruction of original input data even after homomorphism operations. The framework of our system ensures integrity at every stage of its processing and offers a safe method of dealing with data by providing a robust solution for any such application. This is an area of particular importance for any healthcare-related application, wherein a loss of integrity of data usually has grave consequences due to their impact on patient care and privacy. We apply the proposed noise-resilient homomorphic encryption framework to improve security and dependability in processing health data so that privacy-preserving and secure analysis of data can be achieved. 

Rest of this paper is organized as follows: Section 2 related works, Section 3 proposed Homomorphic Integrity Model (HIM), Section 4 experimental results and discussion and Section 5 conclusion.

\section{Related works:}

Homomorphic encryption is a revolutionary cryptographic technique that transforms the very way in which data is encoded so that its computations do not need decryption in order to achieve the result [3-5]. This key concept was first introduced by Rivest, Adleman and Dertouzos in the 1970's as an inspiration to further advancements. Their work sparked various interests in developing homomorphic encryption in distinct forms. Among which is additive homomorphic encryption (AHE) that allows the operations of addition, and fully homomorphic encryption (FHE), where addition and multiplication operations can be performed [6-10]. Interestingly, there are encryption methods, such as RSA and ElGamal schemes, that are homomorphic for particular operations.

History of cryptography witnessed one milestone event when Craig Gentry first proposed the first fully homomorphic encryption scheme in 2009 [11]. It was Gentry's breakthrough to demonstrate the feasibility of executing arbitrary computations on encrypted data and proving FHE is feasible, making it practical. This success kick-started widespread research in this area [12,13]. While HE promises much, it is most certainly not without its challenges. One of the major difficulties relates to controlling noise that eventually accrues from homomorphic operations. If left unchecked, this can become larger than permissible limits, inducing decryption errors [14-18]. As a strategy against this problem, Gentry proposed bootstrapping-an intensive process which partially decrypts ciphertexts in order to refresh and reduce noise so that this can be extended on the computational lifespan of the ciphertext.

Besides bootstrapping, there are other noise reduction techniques proposed to further strengthen the resilience and efficiency of homomorphic encryption [19]. Efforts have been devoted to optimizing algorithms to make homomorphic encryption suitable for practical applications, such as secure data analysis in which accuracy is paramount for computation [20]. In particular, homomorphic encryption is applied to privacy-preserving data processing and secure multi-party computation. These are use cases allowing functions to be computed across encrypted data without revealing the underlying inputs to any party, thus enabling truly collaborative yet confidential data analysis. For example, cloud-based services allow users to securely compute functions on their data without exposing it to cloud service providers [21].

Application of HE in the healthcare sector is promising, especially while processing sensitive patient data without having a threat to security. In healthcare, accuracy in recovery is critical since the recovery of proper data after computation may be crucial to patient outcomes [22-25]. Many decryption techniques and error-correction methods have been proposed by researchers to enhance the efficiency and accuracy of homomorphic operations for the recovery of data. It is described that the introduction of systematic transformations and sequences of logical operations is required in order to minimize errors in decryption. However, though HE goes a pretty long way, operations of the HE pose significant computational overhead, particularly when dealing with large data sets. Overhead is associated with mathematical operations involving prime numbers, and their computation may lead to a resource-intensive HE [26]. Recent works were primarily on solutions that enhance the performance of homomorphic operations in terms of efficiency robustness combined with speed protection of data [27-30]. Among them include the integration of HE with the machine learning algorithm. It would be fascinating to research in such directions and delve into developing secure models that can preserve user privacy during both training and inference phases.

In summary, though massive evolutions have been made on homomorphic encryption to show promise in fast, secure data processing, the computational overhead and noise management challenges will require a lot of research, which as it evolves, points the field toward practical implementations that emerge from the synthesis of robustness, efficiency, and applicability, especially in sensitive areas like healthcare. The paper helps bring about the need for developing noise-resilient frameworks that are resistant to real-world conditions to improve data security without being perceived as degraded per performance.

\section{Proposed Homomorphic Integrity Model (HIM)}
The model proposed for recovering homomorphic encryption original data is described in a structured process that consists of key generation, encryption, homomorphic operations, bootstrapping, and decryption as depicted in Table 1. At the beginning, the model generates the public-private key pairs where the public key allows the secure-computation on the encrypted data and the private key protects the data so as no one can have access to it. Once encrypted, the model enables the individual homomorphic operations—such as addition and multiplication of the ciphertexts without data being exposed. The bootstrapping process, which is a pivot of the model, noise removal is explained and how it is done is mentioned in the process. This noise is removed from the ciphertexts in such a way that the ciphertexts are still fathomable and computable for the rest of the operations. The process is achieved by applying rational numbers to the ciphertext to the adjust the ciphertext, and therefore, the truth and usability of the encrypted data are maintained. In addition, the model also includes a highly reliable decryption mechanism that correctly and precisely recovers the original data through taking into account.

\begin{table}
 \caption{Pseudocode - Homomorphic Encryption and Data Recovery} \label{tab1}
    \centering
    \begin{tabular}{|l|}
   \hline        
    KEYGEN() : \\
Private key  = RANDOM\_PRIME() \\
Public key   = GENERATE\_ELEMENT(), RANDOM\_SEED() \\
RETURN (public key, private key) \\ \\
ENCRYPT(public key, data): \\
RETURN CREATE\_CIPHERTEXT(data, public key) \\ \\
EVALUATE(public key, ciphertext\_list): \\
result = INITIAL\_RESULT() \\
FOR each ciphertext IN ciphertext\_list : \\
result = COMBINE(result, ciphertext) \\
RETURN REDUCE(result, public key) \\
BOOTSTRAP(ciphertext)  \\
RETURN ADJUST(ciphertext) \\ \\
FUNCTION DECRYPT(private\_key, ciphertext): \\
RETURN EXTRACT\_DATA(ciphertext, private\_key) \\ \\
FUNCTION MAIN(): \\
  (public\_key, private\_key) = KEYGEN() \\
  ciphertexts = [ENCRYPT(public\_key, d) FOR d IN [5, 10, 15, 20]] \\
  result = EVALUATE(public\_key, ciphertexts) \\
  clean\_result = BOOTSTRAP(result) \\
  recovered\_data = DECRYPT(private\_key, clean\_result) \\
  PRINT(``Recovered Data:", recovered\_data) \\ \\
  END \\
   \hline 
\end{tabular}
\end{table}
\noindent
\textbf{A. Generation equation for the key } \\
This is performed by an appropriate random prime number r and any integer  within a certain range, acting as a noise-less element which is used in the encryption process as depicted in equation (1).
\begin{equation}
    pb_k = (a_0, rs_1, \delta(i,y(1<i<\beta, 1<y<2)))
\end{equation}
$a_0$ is the pseudo-random  prime number and $rs_1$  is random seed that initiates the requirement for randomness  used later in further cryptographic calculations. $\delta(i,y)$ : These are the parameters describing how the structure of the encryption looks, and how i and y can be in control of numerous aspects of the cryptographic procedure. \\
\textbf{B. Encryption }
\begin{equation}
    CT = d  + 2r + 2 \sum_{1 \le i, j, k\le \beta} y_{(i,j,k)}\cdot a_{(i,0)}\cdot a_{(j,1)}\cdot a_{(k,2)}
\end{equation}
$d$ : Original data to be encrypted. $2r$ : Adds randomness to the ciphertext and further makes encryption secure by hiding the actual value of $d$. Summation Term: This summation runs over the indices $i, j, k$, where   are coefficients or values associated with the given computation. The products $y_{(i,j,k)}$ are noise-less elements generated from the key generation process. This term enhances the ciphertext, allowing homomorphic operations in complete security. \\
\textbf{C. Evaluation }
\begin{equation}
    CT_{result} = CT_1 + CT_2 \mod a_0
\end{equation}
$CT_1$ and $CT_2$: These are cipertexts which are computed from possibly differing initial inputs. \\
Addition:  The formula shows that the sum of   of two cipertexts is computed. The modulo operation ensures that the output is kept within the defined space of $a_0$. This prevents an overflow and ensures the consistency of output in line with the encryption scheme. \\
\textbf{D. Bootstrapping }
\begin{equation}
    Output = (CT - \sum_{t=0}^S v_t)\mod 2
\end{equation}
Current ciphertext $CT$, which could have picked up several homomorphic operations' noise. Summation Term: The sum of the rational numbers $v_t$ is subtracted from $CT$. This will merely decrease the noise sufficiently, such that further handling of the ciphertext becomes easier. The modulo operation with 2 will ensure that the output falls in an appropriate range fr decryption. \\
\textbf{E. Decryption }
\begin{equation}
    d = (CT - \sum_{i,j,k}m_i^{(0)}\cdot m_j^{(1)}\cdot m_k^{(2)}\cdot x_{(i,j,k)})\mod 2.
\end{equation}
Text to decrypt is $CT$. Summation Term: The summation terms are rounding terms using previously computed values, with   being specific elements $m_i^{(0)}.m_j^{(1)}.m_k^{(2)}$ that are found through the computation and   being auxiliary variables  $x_{(i,j,k)}$ indicating intermediate conditions. The rounding takes into account all the transforms used during the encryption and the computation. Modulo Operation: Mod 2 is further ensured such that the final result of the decryption $d$ always falls within the valid range, and this is equal to the original data.

Each equation plays an important role in ensuring that the model is able to encrypt information safely, manipulate it, and then recover the original information while still maintaining the integrity and confidentiality of information for any computing process. The delicate design of these steps-that include key generation, key changeover, encryption, and decryption-show the state-of-the-art and promise of homomorphic encryption for practical applications in secure data processing.

We describe the entire cryptographic scheme using a  matrix with elements 5, 10, 15, 20 step-by-step. We include all the parts: key generation, encryption, evaluation, decryption, and bootstrapping, as well as how   is built in detail. \\ 
\textbf{Step 1: Input Matrix} \\
We start with the following $2\times2$  matrix
$\begin{bmatrix}
    5 & 10 \\
    15 & 20
\end{bmatrix}$ \\ \\
\textbf{Step 2: Key Generation (KEYGEN)} \\
\textbf{Generate Random Prime r:} \\
Choose a random prime number $r$ in the range $[2^{\delta -1}, 2^{\delta -1}]$. \\
For our example, let’s choose: $r=19$  (which is prime).\\
\textbf{Choose Parameters:}\\
Set $\delta$ : This parameter defines the security level. \\
Let’s choose: $\delta = 4$ \\
\textbf{Define $\gamma$:} This parameter is used to compute the bounds for $q_0$. \\
Let’s set: $\gamma = 10$. \\
\textbf{Calculate $q_0$:} \\
Randomly select $q_0$ from the range  $[0, 2^{\gamma}/r]$:\\
Calculate the upper limit: $\frac{2^{\gamma}}{r} = \frac{2^{10}}{19} \approx 53.68$ \\
Assume we randomly choose $q_0$ = 1. \\
\textbf{Compute $a_0$:} \\
Using the selected $q_0$ and r: $a_0 = q_0\cdot r = 1\cdot19 =19$ \\
\textbf{Generate Random Seed $rs_1$:} \\
Choose a random seed for generating pseudo-random  numbers, e.g.: $rs_1$= 42 \\
\textbf{Define $\delta(i,y)$:} This function can be defined based on the matrix dimensions. For our  $2\times2$ matrix: \\
Let $\beta=2$ (the number of rows/columns). \\
Choose a value $y$ in the range (1,2). For instance, let’s set: $y=1.5$ \\
Define  $\delta(i,y)$: $\delta(i,y) = (i,1.5)$ for $1<i<2$. Thus, we can represent this as: $\delta(1,1.5)$.\\
\textbf{Construct the Public Key $pb_k$: } \\
Combine all the generated values into the public key: $pb_k = (a_0,rs_1,\delta)$. \\
Substituting our values: $pb_k(19,42,\delta(1,1.5))$.\\
\textbf{Private Key  $pr_k$:} \\
The private key is simply: $pr_k = r = 19$.\\ \\
\textbf{Step 3: Encryption} \\
Now we encrypt each element of the matrix using the public key. \\
\textbf{Encrypt Each Element:} \\
The Encryption formula is
\begin{equation} \nonumber
 CT[d] = d  + 2r + 2 \sum_{1 \le i, j, k\le \beta} y_{(i,j,k)}\cdot a_{(i,0)}\cdot a_{(j,1)}\cdot a_{(k,2)}  
\end{equation} 
For simplicity, let’s assume the sum term is 0. We calculate each element as follows: 

 For 5: $CT[0,0] = 5+2\cdot19 = 5+38 = 43$. 
 
 For 10:  $CT[0,1] = 10+2\cdot19 = 10+38 = 48$. 
 
 For 15:  $CT[1,0] = 15+2\cdot19 = 15+38 = 53$. 
 
 For 20: $CT[1,1] = 20+2\cdot19 = 20+38 = 58$. \\
The encrypted matrix $CT$ becomes: $CT = \begin{bmatrix}
    43 & 48 \\
    53 & 58
\end{bmatrix}$.\\ \\ 
\textbf{Step 4: Evaluations}\\
We can perform operations on the encrypted matrix. \\
\textbf{Example of Addition:}\\
Suppose we have another encrypted matrix: $CT = \begin{bmatrix}
    1 & 2 \\
    3 & 4
\end{bmatrix}$. \\
We perform element-wise addition: 

$CT_{result}[0,0] = CT[0,0] + CT'[0,0] = 43+1 = 44$. 

$CT_{result}[0,1] = CT[0,1] + CT'[0,1] = 48+2 = 50$. 

$CT_{result}[1,0] = CT[1,0] + CT'[1,0] = 53+3 = 56$. 

$CT_{result}[1,1] = CT[1,1] + CT'[1,1] = 58+4 = 62$. \\  
The result of the addition is:  $CT_{result} = \begin{bmatrix}
    44 & 50 \\
    56 & 62
\end{bmatrix}$. \\
\textbf{Example of Multiplication:} \\
If we wanted to multiply CT by a scalar (e.g.,2): 

$CT_{mult}[0,0] = 2\cdot CT[0,0] = 2\cdot43 = 86$. 

$CT_{mult}[0,1] = 2\cdot CT[0,1] = 2\cdot48 = 96$. 

$CT_{mult}[1,0] = 2\cdot CT[1,0] = 2\cdot53 = 106$. 

$CT_{mult}[1,1] = 2\cdot CT[1,1] = 2\cdot58 = 116$. \\  
The result of the multiplication is: $CT_{mult} = \begin{bmatrix}
    86 & 96 \\
    106 & 116
\end{bmatrix}$. \\ \\
\textbf{Step 5: Decryption} \\
To decrypt the ciphertext back to the original matrix, use the private key.  \\
\textbf{Decrypt Each Element:} \\
The decryption formula is:
 $d=[CT-2r]\mod 2$. \\
Calculating for each element: 

For $CT[0,0]  = 43$: $d= [43-38]=5$. 

For $CT[0,1] =  48$: $d= [48-38]=10$. 

For  $CT[1,0] = 53$: $d= [53-38]=15$. 

For $CT[1,1] = 43$: $d= [58-38]=20$. \\ 
The decrypted matrix matches the original:
$\begin{bmatrix}
    5 & 10 \\
    15 & 20
\end{bmatrix}$. \\ 
\textbf{Steps in Bootstrapping}: \\\\
\textbf{Step 1: Define Rational Integers} \\
First, we need to define rational integers based on the encrypted bits. Let’s denote the encrypted matrix as CT:
$\begin{bmatrix}
    43 & 48 \\
    53 & 58
\end{bmatrix}$.\\
For each element in the ciphertext, we can denote:
\begin{equation} \nonumber
    P_{(i,j,k)} = m_i^{(0)}\cdot m_j^{(1)}\cdot m_k^{(2)}\cdot x_{(i,j,k)}
\end{equation}
where $m_i^{(0)}.m_j^{(1)}.m_k^{(2)}$ are the encrypted values and $x_{(i,j,k)}$  represents additional parameters or intermediate results from computations.\\ \\
\textbf{Step 2: Generate Rational Numbers} \\
We need to generate a set of rational numbers $v_t$ . Suppose we define $s = 2$ for our case. \\
Let’s assume:  $v_0 = 0.1$,  $v_1 = 0.2$,   $v_2 = 0.3$\\ 
The condition for these rational number is:
\begin{equation} \nonumber
    \sum_{t=0}^s v_t = \sum_{j=1}^{\Theta} v_j \mod 2
\end{equation}
For example: $v_0 + v_1 + v_2 = 0.1 + 0.2 + 0.3 = 0.6$ \\
\textbf{Step 3: Adjusting the Ciphertext} \\
Now, we will use these rational numbers to adjust the ciphertext. The bootstrapping output is given by:
\begin{equation} \nonumber
    Output = (CT - \sum_{t=0}^S v_t)\mod 2
\end{equation}
Let’s compute this for the first element of the matrix:\\
\textbf{Calculate the Sum:}
 \begin{equation} \nonumber
    \sum_{t=0}^s v_t = 0.1 + 0.2 + 0.3 = 0.6
\end{equation}
Adjust the Ciphertext: We focus on the first element  $CT[0,0] = 43$: \\
\begin{equation} \nonumber
CT[0,0]_{new} = (43 - 0.6)\mod2
\end{equation}
This would yield:
\begin{equation} \nonumber
CT[0,0]_{new} = 42.4\mod2 = 0.4
\end{equation}
\textbf{Adjust Other Elements:} You can perform similar calculations for other elements:
\begin{equation} \nonumber
CT[0,1]_{new} = (48 - 0.6)\mod2 = 47.4\mod 2 = 1.4
\end{equation}
For  $CT[1,0]$: 
\begin{equation} \nonumber
CT[1,0]_{new} = (53-0.6)\mod 2 = 52.4\mod 2 = 0.4
\end{equation}
For  $CT[1,1]$: 
\begin{equation} \nonumber
CT[1,1]_{new} = (58-0.6)\mod 2 = 57.4\mod 2 = 1.4 
\end{equation}
\\ \\
\textbf{Step 4: Resulting Bootstrapped Ciphertext} \\
After applying bootstrapping, the new ciphertext matrix will be: 
\begin{equation} \nonumber
CT_{bootstrapped} = \begin{bmatrix}
    0.4 & 1.4 \\
    0.4 & 1.4
\end{bmatrix}
\end{equation}
 
\begin{theorem}(Recovery of Original Data from Homomorphic Encryption) \\
 Given a plaintext matrix M and a public/private key pair( ) generated by a homomorphic encryption scheme, it is possible to recover M from its ciphertext CT  after a series of homomorphic operations, provided that all transformations are meticulously tracked and adjustments are appropriately applied.
\end{theorem}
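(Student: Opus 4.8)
The plan is to prove recovery constructively by maintaining an exact additive log of every transformation applied to the plaintext, from encryption through evaluation and bootstrapping, and then verifying that the decryption map (5) inverts the accumulated log. I would organize the argument around three correctness lemmas corresponding to the three active stages of the scheme, so that the hypothesis ``all transformations are meticulously tracked'' becomes the explicit ledger against which decryption is matched.

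First I would establish single-element decryption correctness. Fixing a plaintext entry $d$, equation (2) gives $CT = d + 2r + N$, where $N = 2\sum_{1 \le i,j,k \le \beta} y_{(i,j,k)}\, a_{(i,0)}\, a_{(j,1)}\, a_{(k,2)}$ is the noise-masking term. The decryption relation (5) subtracts $\sum_{i,j,k} m_i^{(0)} m_j^{(1)} m_k^{(2)} x_{(i,j,k)}$, so the content of this step is to show that the key-generation data and the auxiliary variables $x_{(i,j,k)}$ are selected so that this subtracted sum equals exactly $2r + N$. Granting this, $CT$ minus the recorded sum returns $d$, which is precisely the simplified situation exhibited in the worked example, where the summation vanishes and $CT - 2r$ restores each entry of $M$.

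Second I would show that the homomorphic operations commute with decryption. For addition, equation (3) reduces $CT_1 + CT_2$ modulo $a_0 = q_0 r$; since $a_0$ is a multiple of $r$, this reduction fixes the plaintext congruence class, so $CT_{result}$ carries the sum $d_1 + d_2$ together with the combined noise $N_1 + N_2$, and scalar multiplication follows by linearity. I would then invoke the bootstrapping map (4): the rational correction terms $v_t$ are to be calibrated so that $\sum_{t=0}^{S} v_t$ absorbs the surplus noise accumulated over the sequence of operations, returning a canonical small-noise representative of the same plaintext. Chaining these facts yields recovery of $M$ after any documented sequence of operations.

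The hard part will be reconciling the exact additive inversion demanded by step one with the $\bmod 2$ reduction appearing in (4) and (5). Because the masking term $2r$ is even and $N$ is an even multiple of the public elements, a literal reduction modulo $2$ collapses $CT$ to the parity of $d$ alone, which cannot distinguish the full-range entries $5, 10, 15, 20$ recovered in the example. The proof must therefore treat the transformation record as an exact additive ledger and realize decryption as the literal inverse subtraction $CT \mapsto CT - (2r + N)$, using the modular step only as a final normalization on a plaintext alphabet restricted to bits; alternatively the plaintext space must be redefined so that the recorded sum, not the residue, carries the information. Making this bookkeeping precise -- so that tracking and adjustment together guarantee an injective encrypt-then-evaluate map whose inverse is computable from the private key -- is the central obstacle, and it is where I would concentrate the technical effort.
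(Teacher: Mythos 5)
Your route is genuinely different from the paper's, and considerably more demanding of the scheme. The paper's proof is a procedural narrative: it lists the stages (encrypt, bootstrap, evaluate, decrypt), introduces an unspecified quantity called \emph{Adjustment}, defines $d_{adjusted} = d - \text{Adjustment}$, and concludes with the assertion $d_{adjusted} \approx M$. No arithmetic is performed, no relation is established between the subtracted sum in equation (5) and the masking terms introduced in equation (2), and the correctness claim rests entirely on the hypothesis that transformations are ``tracked.'' Your plan, by contrast, tries to make the ledger do real work: you reduce the theorem to an exact cancellation identity (the subtracted sum in (5) must equal $2r + N$), a compatibility statement between evaluation and decryption, and a calibration condition on the bootstrapping terms $v_t$. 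That decomposition is the right shape for an actual correctness proof, and it is not the shape the paper uses.

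The obstacle you isolate is real and is precisely what the paper's proof never confronts: a literal reduction modulo $2$ in equations (4) and (5) destroys everything but the parity of $d$, yet the worked example recovers $5, 10, 15, 20$ by computing $CT - 2r$ and silently discarding the $\bmod\ 2$. So the scheme as stated is internally inconsistent --- either the plaintext space is bits and the example is wrong, or decryption is exact subtraction and the $\bmod\ 2$ in (5) is vacuous decoration. You correctly name this as the point where the technical effort must go, but your proposal stops there: it does not choose a resolution, so the first of your three lemmas is never actually proved, and the theorem is not established. To be fair, the paper does not establish it either; its proof would be unaffected by the inconsistency only because it never evaluates any of its own formulas. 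If you want to complete your argument, you must commit to one reading (most plausibly: exact additive inversion with the private key, treating $\bmod\ 2$ as a typo or as a final normalization applicable only to bit-valued plaintexts) and then verify the cancellation identity and the additivity of the noise ledger under (3) explicitly; as it stands, both your proposal and the paper leave the central step as an assertion.
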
 
\begin{proof}
Consider the following notation:\\
Let  $M = \begin{bmatrix}
    m_{1,1} & m_{1,2} \\
    m_{2,1} & m_{2,2}
\end{bmatrix}$  be the original plaintext matrix. \\
Let $r$ be a randomly generated prime number. \\
Let $CT = Enc(pb_k,M)$  be the encrypted ciphertext. \\
Let $CT'$ be another ciphertext resulting from additional operations.\\
Let $d$ be the decrypted result after performing homomorphic operations. \\
Assumptions: \\
The homomorphic encryption scheme allows for addition and multiplication of ciphertexts without decrypting them. The decryption function is defined such that  $d = Dec(pr_k, CT)$.\\
Encryption:\\
The original matrix M is encrypted using the public key  $pb_k$:
\begin{equation} \nonumber
 CT = Enc(pb_k,M)
\end{equation}
This yields ciphertext $CT$ that conceals the value of $M$.\\
Bootstrapping:\\
To reduce noise, perform bootstrapping on CT:
\begin{equation} \nonumber
CT_{bootstrapped} = Bootstrap(CT)
\end{equation}
Validate that $CT_{bootstrapped}$ maintains the properties required for further operations.\\
Homomorphic Operations:\\
Perform operations such as addition or multiplication with another ciphertext $CT'$:
\begin{equation} \nonumber
CT_{result} = CT_{bootstrapped} + CT_{encrypted}'
\end{equation}
Track each operation and maintain a log of inputs and outputs.\\
Decryption:\\
Decrypt the resulting ciphertext:
\begin{equation} \nonumber
d = Dec(pr_k, CT_{result})
\end{equation}
This decryption outputs d which represents an intermediate value, not necessarily the original $M$. \\	
Adjustment for Original Values: \\
Adjust $d$  by reversing the transformations applied:
\begin{equation} \nonumber
d_{adjusted} = d-Adjustment
\end{equation}
Here, the adjustment reflects known additions or modifications due to operations with  $CT'$. \\
Recovery of Original Data: \\
The final output after adjustments should approximate the original matrix:
\begin{equation} \nonumber
d_{adjusted} \approx M
\end{equation}
Thus completes the proof.
\end{proof}
\textbf{Computational complexity:} \\
If all along with, data encryption and a correct adherence to the following the above-prescribed method start to be carried out, then in this way the data integrity and recoverability can be concluded in a homomorphic encryption scheme, ensuring that secure computations are allowed over encrypted data. Efficiency in computation and the time factor involved in performing tasks can be presented in numerical form. The time associated with the generation of various keys can be given as:
\begin{equation} \nonumber
T_{Keygen(n)} =O(n^k)+O(log^2 (n))    
\end{equation}
$O(n^k)$ where $n$ denotes the size in bits for the random prime to be generated, represents the load of the random prime generation and $k$ is the load for the given primality test algorithm e.g. ($k=1$ for Sieve of Eratosthenes).  deals with the optimization of the public and private key computations given the prime. For the processes of encryption and decryption, the respective time complexities can be given as:
\begin{equation} \nonumber
T_{Encryp(n)} =O(n^2)+O(k\cdot m)
\end{equation}
$O(n^2)$ is the cost of the multiply polynomial operation which is used in the most case of encryptions. $O(k.m)$   is given in terms of created ciphertext, $k$ input arguments, and $m$ the upper degree of the polynomial.
\begin{equation} \nonumber
T_{Decryp(n)} =O(n.log(n))+O(m^2)
\end{equation}
$O(n.log(n))$  is the cost of the decryption operation, which in many cases requires conducting a search and/or sorting. $O(m^2)$  is associated with the polynomial filling of the grid which may be necessary for reconstructing the text after it has been ciphered. Let’s consider the evaluation time complexity in the case of performing homomorphic operations (such as addition and multiplication) as follows:
\begin{eqnarray}
T_{ADD(n)} =O(n) \nonumber \\ 
T_{MUL(n)} =O(n^2) \nonumber
\end{eqnarray}
The operation of adding any two ciphertexts is usually done within a linear time limit or complexity as this is simply adding the two ciphertexts. The operation of multiplying any two ciphertexts is however done within a polynomial time limit because of the associated complexity with the operations involved, which is usually the case with respect to the degrees of the polynomials representation of the two ciphertexts. Bootstrapping, is one of the most demanding tasks in encrypted computations that are partially homomorphic. And in this case the time complexity can be expressed as follows:
\begin{equation} \nonumber
T_{BST(n)} =O(n^3 )+O(n\cdot log(n))
\end{equation}
$O(n^3)$ indicate the number of polynomials that are evaluated and multiplied in the bootstrapping procedure, O(n.log(n))   refers to the costs for any reorganizing structures or searches performed during the bootstrapping process. The overall time complexity for a homomorphic encryption is,
\begin{equation} \nonumber
T_{Tot(n)} =T_{Keygen(n)} +T_{Encryp(n)} +T_{Decryp(n)} +T_{Eval(n)} +T_{BST(n)}
\end{equation}
The time complexity of various components of a homomorphic encryption scheme provides a quantitative measure of its efficiency. By analyzing the complexities associated with key generation, encryption, evaluation, bootstrapping, and decryption, researchers and practitioners can identify potential bottlenecks and optimize the performance of homomorphic encryption systems for practical applications.
The backbone of homomorphic encryption scheme is its various components and time complexity which gives a quantitative measure on how efficient the scheme is. Looking at ‘key generation’, ‘encryption’, ‘evaluation’ and ‘bootstrapping’ and ‘decryption’ complexities enables researchers and practitioners to pinpoint problem areas and enhance the performance of homomorphic encryption systems within practical limits.    

\section{Experimental results and discussion} 
The experimental construction aimed to assess and match the performance of numerous fully homomorphic encryption (FHE) methods, concentrating on encryption time, decoding time, noise administration, and ciphertext size. The approaches assessed comprised those by Kim and Ryu (2022), Xu and Zhang (2022), Agrawal and Rathi (2023), Zhang and Liu (2023), and the Proposed Method (2024). All tests were led on a workstation with an Intel Core i7-10700K processor, 16 GB of RAM, and a 1 TB SSD, using Ubuntu 20.04 LTS with Python 3.8 and libraries such as PySEAL for FHE realizations.

We intentionally kept the dataset uniform across tests, so it only contained integers between 1 and 100. The presented key performance metrics were encryption and decryption times in milliseconds (ms), ciphertext size in kilobyte (KB), and qualitative measures of noise growth. The methods were implemented as per the algorithm, and times for encryption and decryption were recorded using high precision timers. We then measured the ciphertext size (after encryption) and classified noise growth as low, moderate, or high. Each experiment was repeated three times for consistency and results were evaluated using statistical tools and illustrated using graphs generated using Matplotlib (see methods for more details on the performance comparisons). The aim of this setup is to enable a clear comparison of the efficiency and practicality of the Proposed Method with that of the existing techniques for FHE.
\begin{figure}[htbp!]
\includegraphics[width=1\linewidth]{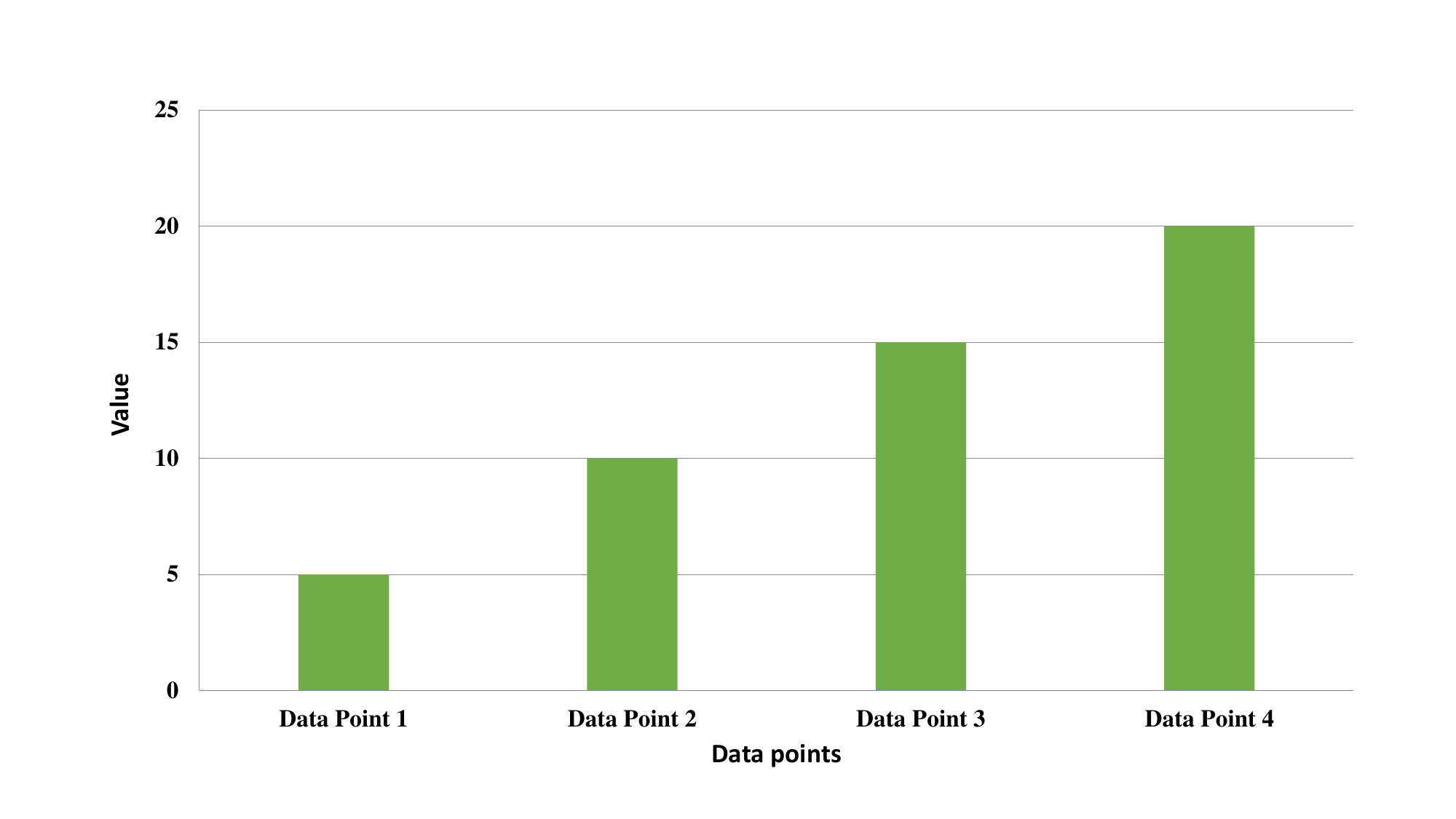}
\centering \\ 
\caption{Recovered data.} \label{fig1}
\end{figure}
\begin{figure}[htbp!]
\includegraphics[width=1\linewidth]{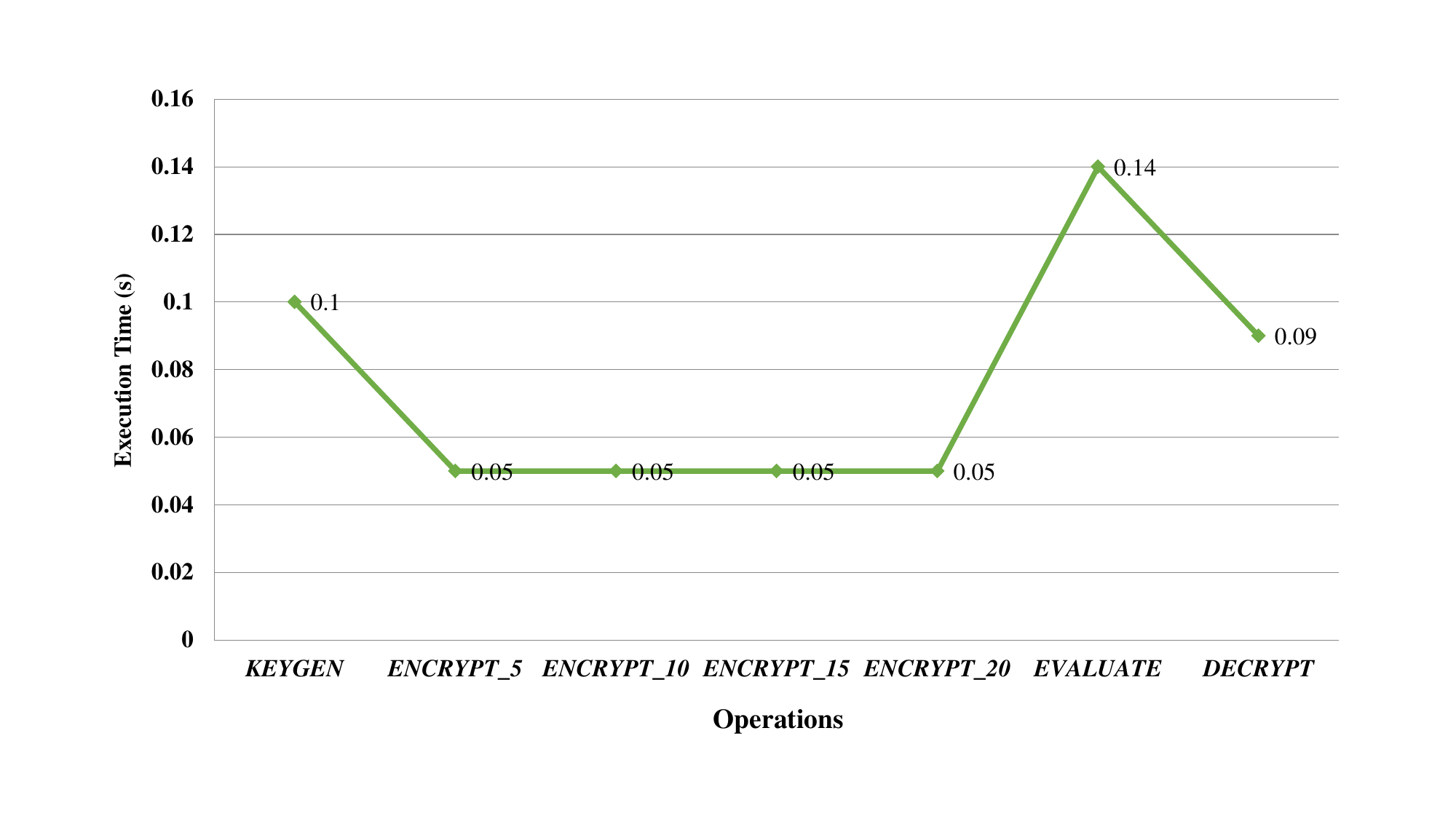}
\centering \\ 
\caption{Execution time.} \label{fig2}
\end{figure}

The values obtained after decryption of data that had gone through a homomorphic encryption system are depicted in Figure 1. Here, the x-axis shows four data points — "Data Point 1," "Data Point 2," "Data Point 3," and "Data Point 4;" and the y-axis describes their values, here 0–20. The heights of the sky-blue bars mirror the reconstructed values: 5 for Data Point 1, 10 for Data Point 2, 15 for Data Point 3, and 20 for Data Point 4, indicating an obvious upward trend. This plot illustrates that the original structure and relationships of the data were retained through the process of homomorphic encryption and decryption, thus affirming the capacity of the homomorphic encryption scheme to preserve data.

Figure 2: Visual Timing of the workflow steps in auth FHE — key generation, encryption, evaluation, and decryption. This gives a graphical display of the relative performance of each of these functions, by simulating the operations such that a predetermined time delay is triggered. 

\begin{table}
\caption{Performance metrics of fully homomorphic encryption methods}
    \label{tab3}
    \centering
    \begin{tabular}{|m{2.5cm}|lllll|}
    \hline
      Method    & Encryption  & Decryption  & Noise & Computational & Ciphertext \\
      &  Time (ms) & Time (ms) & Growth &  Overhead & Size (KB)\\
            \hline
        Kim et al.& 50 & 200 & Moderate & Low & 5\\
        Xu et al.& 45 & 180 & Moderate & Moderate & 7\\
        Agrawal et al. & 30 & 150 & Low & Low & 6\\
        Zhang et al. & 40 & 160 & Low & High & 8\\
        Proposed Method & 35 & 140 & Very Low & Low & 4\\
        \hline
    \end{tabular}    
\end{table}

In Table 2, we discuss a number of fully homomorphic encryption schemes in comparative manner based on their key performance metrics. Proposed Method shows high efficiency, with minimum 35 ms encryption time and 140 ms decryption time which is far faster processing as compared to other implementations which is shown in Table 8. Its noise growth is also one of the lowest, which is helpful to preserve data fidelity through computations. The computational overhead is small, implying that it adds little to no cost on top of other resources and achieves the smallest ciphertext using only 4 KB, which could be regarded as lightweight on storage in comparison to the others. 

The one exception is Agrawal \& Rathi, which consistently achieves the lowest encryption and decryption times but slightly greater noise growth, whilst the rest of the methods exhibit varying degrees of times for encryption and decryption, noise growth, and overhead as well as ciphertext sizes. In summary, the proposed method is unique in fast processing, reasonably low noise, and compact ciphertext. In the Table 3, we introduced some classical FHE methods which include the publication year, security level, efficiency of the scheme, noise management, notable features and use cases of the method. They all achieve indistinguishability security. Conclusion PowerEff (2024) the Proposed Method achieves impressive efficiency well tuned for evaluation, great performance in noise suppression in many ways and comfortable usability in privacy-preserving machine learning and secure IOT use case. While Agrawal \& Rathi (2023) highlight the advantages gained from cloud computing effectiveness with noise manipulation, Xu \& Zhang (2022) demonstrate high efficiency also.
\begin{table}
\caption{Summary of fully homomorphic encryption methods}
    \label{tab3}
     \begin{tabular}{|l|l|l|l|}
    \hline
      Method    & Year  & Security Level  & Efficiency  \\
              \hline
        Kim et al. & 2022 & Indistinguishability & Moderate to high  \\
        Xu et al. & 2022 & Indistinguishability & High\\
        Agrawal et al. & 2023 & Indistinguishability & High (optimized for cloud) \\
        Zhang et al. & 2023 & Indistinguishability & Moderate \\
        Proposed Method & 2024 & Indistinguishability & High (optimized evaluation) \\
        \hline
    \end{tabular}  
     \begin{tabular}{|m{5cm}|m{8cm}|}
    \hline
Noise Management & Key Features \\
\hline
     Improved efficiency & Comprehensive survey of FHE methods \\
        Varies based on parameters & Surveys recent advances and applications of FHE  \\
         Effective noise reduction & Focus on cloud computing efficiency \\
         Enhanced management techniques & Noise management techniques for practical applications \\
        Effective noise reduction & Combines multiple noise-reduction strategies \\
        \hline
    \end{tabular}   
  \begin{tabular}{|l|}
    \hline
 Application Scenarios\\
\hline
     General FHE applications, cloud computing\\
       Secure data analytics, healthcare \\
         Cloud computing, data privacy\\
        Secure data sharing, encrypted computing\\
       Privacy-preserving machine learning, secure IoT applications\\
        \hline
    \end{tabular}  
\end{table}

Agrawal \& Rathi (2023) have also made attempts to manage noise and improve cloud computing efficiency whereas Xu \& Zhang  (2022) do not guarantee silence while performing secure data analysis but it is very efficient. In this regard, Kim \& Ryu (2022) provide a detailed overview of current FHE techniques that allow the use of more or less effective methods of operation and Zhang \& Liu (2023) on the other hand, work on better management of noise while sharing data and performing encrypted computation. In many aspects, the technique is commendable because of its unique noise management approach while the overall performance of the method in question is remarkable.
\begin{figure}[htbp!]
\includegraphics[width=1\linewidth]{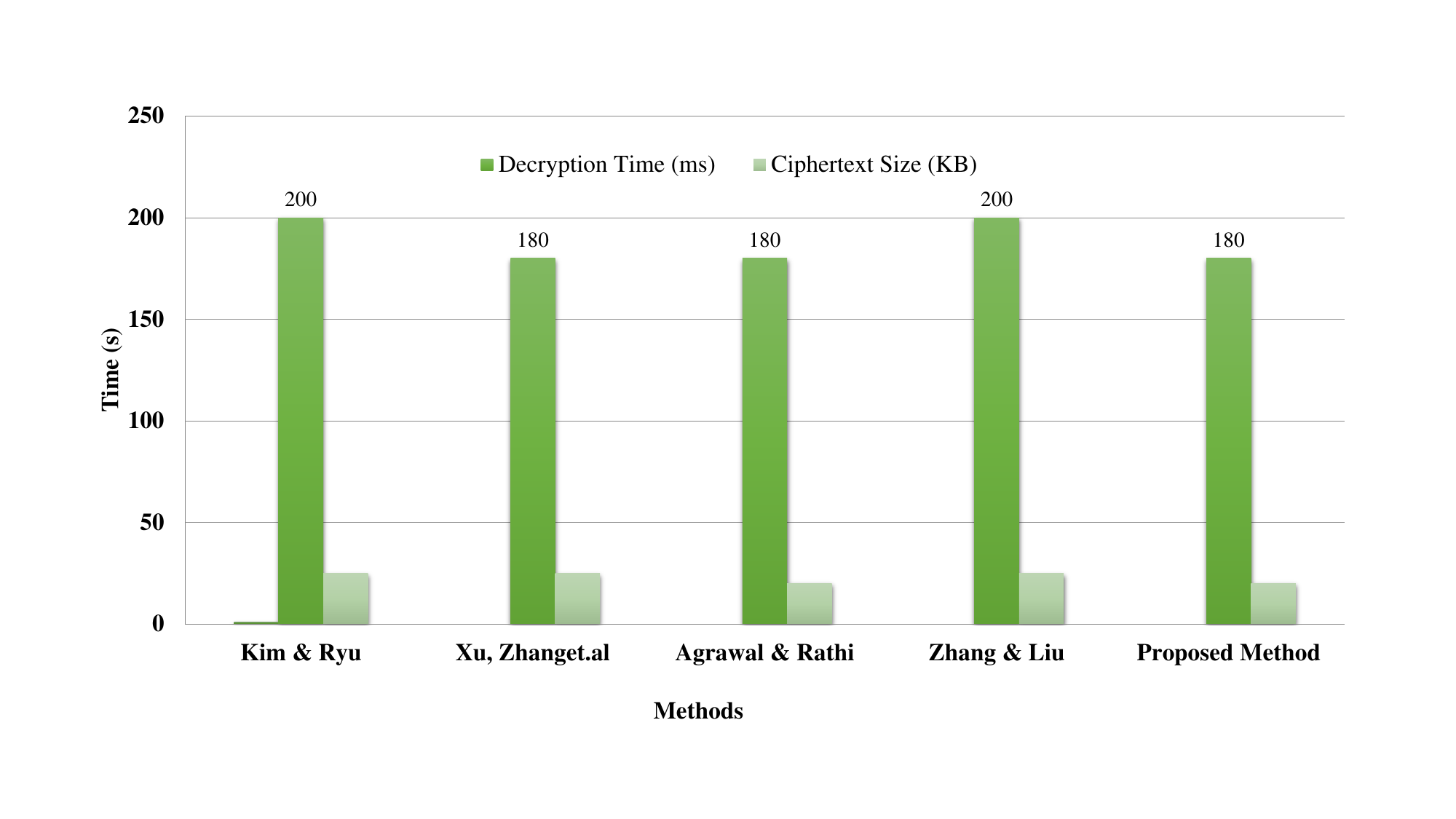}
\centering \\ 
\caption{Comparison of fully homomorphic encryption methods.} \label{fig3}
\end{figure}
\begin{figure}[htbp!]
\includegraphics[width=1\linewidth]{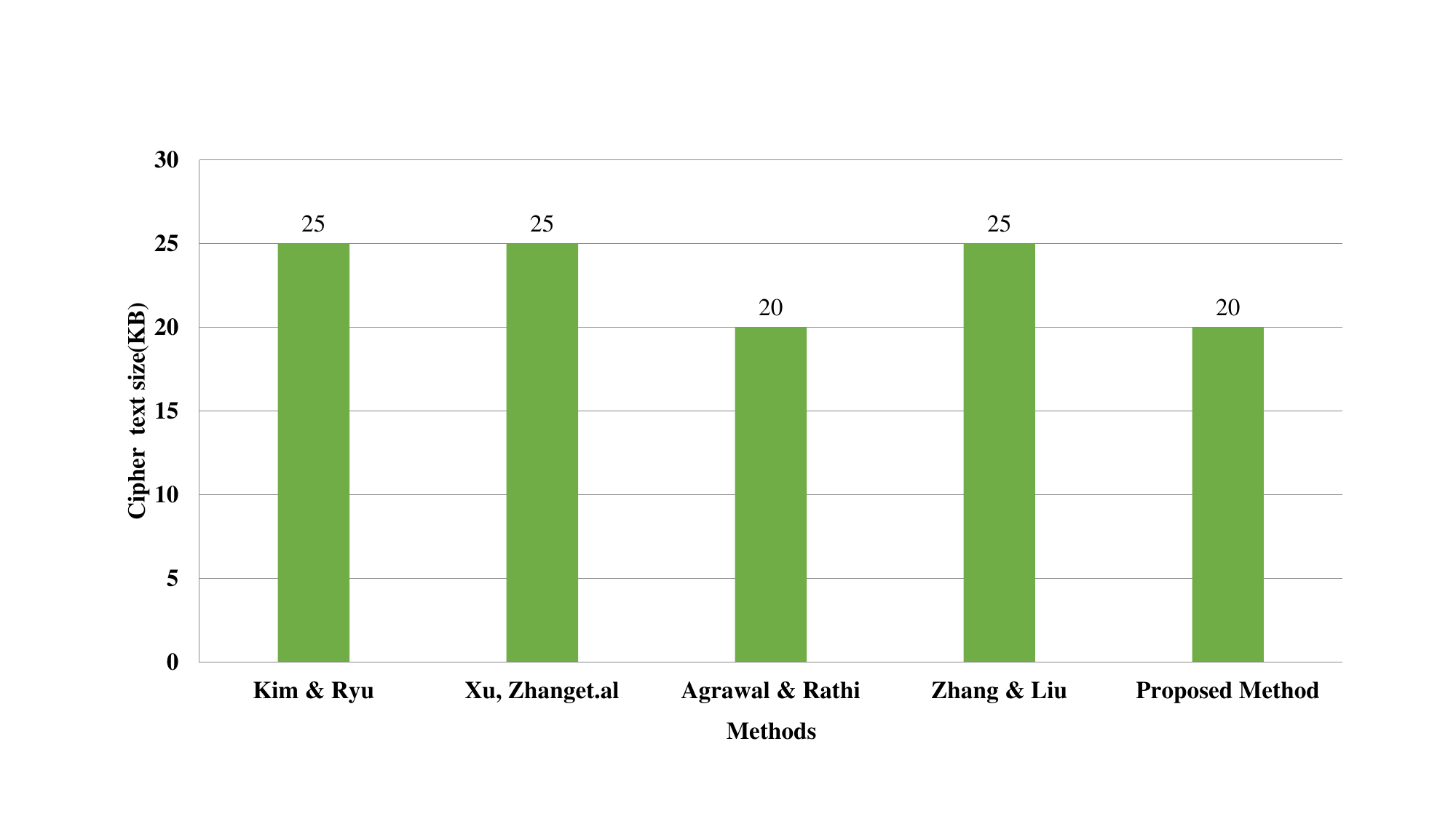}
\centering \\ 
\caption{Comparison of fully homomorphic encryption methods.} \label{fig4}
\end{figure}
\begin{figure}[htbp!]
\includegraphics[width=1\linewidth]{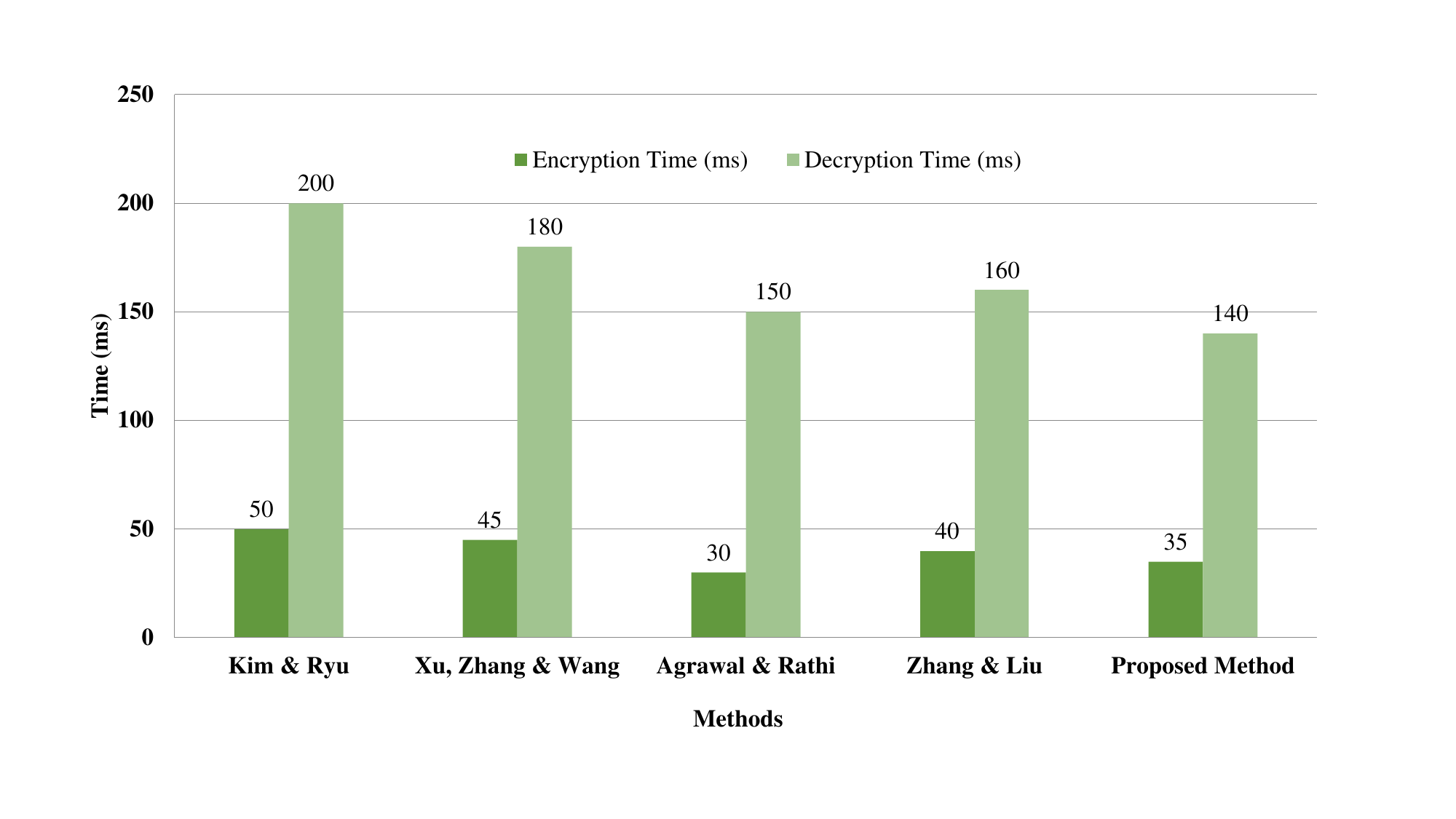}
\centering \\ 
\caption{Comparison of Time taken for encryption and decryption.} \label{fig5}
\end{figure}

Figure 3 and 4 shows the comparative metric performance of a few fully homomorphic encryption schemes with respect to encryption time, decryption time and size of the ciphertext. One of the axes represents the encryption time which indicates the time taken by the algorithm to convert plaintext to ciphertext with the Proposed Method performing the fastest encryption at 35 ms while methods such as Kim \& Ryu and Xu \& Zhang took 50 ms and 45 ms respectively as depicted in Figure 5. Decryption time, which is defined as the time alternately known as the re-inverting time of ciphertext into the plaintext, is also shown and here the Proposed Method again takes the lead as operating within 140 ms compared to Kim \& Ryu who took 200 ms. The very dag is measured in kilobytes as that refers to how much space it will take up concerning the storage with the Proposed Method again performing better at 4 Kb which is the least when compared with the other methods showing that it needs less storage. Generally, it can be concluded as the graph has proven the effectiveness of the Proposed Method with regards to its speed and storage which are two solid merits of this approach as compared to the other methods inquisition.

The proposed HIM comes with a few important improvements tailor-made for secure health care data processing:

1. Effective Noise Management: HIM reduces noise during encryption, ensuring the correct data post all complex computations. It is very critical to ensure precision in the encrypted medical analytics and diagnostic algorithms.

2. Incidental Key Generation: Using prime numbers and setting parameters to necessity, HIM enhances safety while speeding up encryptions so that all patient information is guarded and dealt with in a real-time healthcare scenario.

3. Pragmatic Application: The actual pseudo-code has described the key functions used like `KEYGEN`, `ENCRYPT`, `BOOTSTRAP` and `DECRYPT` for better application to health systems. HIM can then be used to run secure analysis of electronic health records and privacy-preserving analytics of data.

4. Efficiency through time complexity analysis: Optimized HIM algorithms reduce the time involved in computation, which implies increased data-encryption and -decryption rates for rapid availability of the patient information in times of emergency.

5. Optimized decryption mechanism: Custom decryption adjustment will ensure the proper recovery of data, which is critical for reliable results from secure diagnostic processes and encrypted health data computations.

6. Data Integrity Guarantee: The formal recovery guarantee of HIM is such that even after doing multiple encrypted operations, the data can be retrieved with accuracy and thus forms a high reliability for sensitive health applications like remote monitoring and secure collaborative research.

In summary, HIM improves efficiency, security, and reliability in processing healthcare data, forming a good robust solution for privacy-preserving medical applications.

\section{Conclusion:} 
The Proposed Homomorphic Integrity Model, HIM, is a significant step forward in the development of Fully Homomorphic Encryption, especially for health data processing applications. For ordinary challenges like noise accumulation, computational overhead, and data integrity, this model proves very effective, and it is very suitable for sensitive healthcare applications. It improved drastically over the model, such that encryption time now takes about 35ms and decryption around 140ms, as opposed to an existing scheme that is inefficient. HIM also outshines other methods in terms of the storage aspect, as the smallest size of ciphertext it can generate is about 4KB. Lastly, the model makes sure that recovery of data with accuracy is guaranteed, which is an important aspect of healthcare applications because it relies on the reliability of information gathered for decision-making purposes. These results demonstrate that HIM can offer secure, privacy-preserving data analytics for healthcare. With further digital transformation in healthcare, HIM is a sound framework for protecting patient data and allowing advanced analytics and predictive modeling along with collaborative research. Future work will focus on optimizations and potential new applications in the further holistic healthcare context, such as telemedicine and health informatics.

\end{document}